\newtheorem{theorem}{Theorem}
\newtheorem{proposition}[theorem]{Proposition}
\begin{document}

\title{Permanence and Stability of a Kill the Winner Model in Marine Ecology}
\author{Daniel A. Korytowski and Hal L. Smith}
%
%
\maketitle

\abstract{We focus on the long term dynamics of ``killing the winner'' Lotka-Volterra  models of marine communities consisting of bacteria, virus,  and zooplankton. Under suitable conditions, it is shown that there is a unique equilibrium with all populations present which is stable, the system is permanent, and the limiting behavior of its solutions is strongly constrained.
}

keywords: Virus, bacteria, zooplankton, kill the winner, infection network, Lotka-Volterra system, permanence.


\section{Introduction}
\label{sec:1}
It is now known that the microbial and viral communities in marine environments are remarkably diverse
but are supported by relatively few nutrients in very limited concentrations \cite{Su,W}. What can explain the observed diversity?
What prevents the most competitive bacterial strains from achieving large densities at the expense of less competitive strains?
Thingstad \cite{Th,Th2,Th3,Th4} has suggested that virus impose top down control of bacterial densities. Together with various
coauthors, he has described an idealized food web consisting of bacteria, virus and zooplankton to illustrate mechanisms
of population control, referred to as ``killing the winner'' since any proliferation of a ``winning'' bacterial strain results in
increased predation by some virus. The kill the winner (KtW) mathematical model of this scenario, in the form of a system of Lotka-Volterra
equations for bacterial, virus, and zooplankton densities is, as noted by Weitz \cite{W}, based on the assumptions that
(1) all microbes compete for a common resource, (2) all microbes, except for one population, are susceptible to virus infection,
(3) all microbes are subjected to zooplankton grazing, (4) viruses infect only  a single type of bacteria.

Various forms of the KtW model have appeared in the work of Thingstad et al. \cite{Th,Th2,Th3,Th4} and recently in
the monograph of Weitz \cite{W}. As the nutrient level can be assumed to be in quasi-steady state with consumer densities, the models
typically involve only the $n$ bacteria types,  $n-1$ virus types, and a single zooplankton. While the literature contains many numerical simulations of KtW solutions, very little is know about the long term behavior of these solutions. It is the aim of this paper to initiate a mathematical
analysis of this important model system. We will show that the equilibrium with all populations present is unique and stable to small perturbations, that the system is permanent
in the sense that all population densities are ultimately bounded away from extinction by an initial condition independent positive quantity, and that
the long-term average of each population's density is precisely equal to its corresponding positive equilibrium value. In addition, we are able to
provide some qualitative information about the long term dynamics. It is shown that the zooplankton density and the density of the bacterial strain
resistant to virus infection converge to their equilibrium value. Furthermore, if a solution does not converge to the positive equilibrium, then its long-term dynamics can be described by an uncoupled system consisting of $n-1$ conservative two-species systems involving each virus-susceptible bacteria and its associated virus. This implies that non-convergent solutions are, at worst, quasi-periodic.

Thingstad notes in \cite{Th2} that a weakness of the killing the winner hypothesis is the assumption (4) that each virus infects only a single type of bacteria. Indeed, recent data \cite{Fl,Jover,W} suggests that some virus have large host range. We will also show that most of our conclusions stated above
hold without the restriction (4). For example, they hold for a nested infection network.

The results described above allow one to determine a plausible route by which a KtW community (satisfying $(1)-(4)$) consisting of $n$ bacterial
strains, $n-1$ virus strains, and a single non-specific zooplankton grazer might be assembled
starting with a community consisting of a single bacteria and its associated virus and subsequently adding one new population at a time until the final community is achieved. By a plausible route, we require that each intermediate community be permanent \cite{HS}, also called uniformly persistent \cite{SZ,T,ST}, since
a significant time period may  be required to make the transition from one community to the next in the succession and therefore each community must be resistant to extinctions of its members. In \cite{KS,KS2}, considering only bacteria and virus communities, we established a plausible route to the assembly of a community consisting of $n$ bacterial
strains and either $n$ or $n-1$ virus strains in which the infection network is one to one under suitable conditions. See also \cite{HaMiSn} although they did not infer permanence.  Therefore, since we merely need to add zooplankton to  community consisting of $n$ bacteria and $n-1$ virus, the main result of this paper ensures that there is a plausible assembly path to the KtW community.

In the next section, we formulate our KtW model and state our main results. Technical details are include in a final section.

\section{The KtW model}

Our KtW model, consisting of $n\ge 2$ bacterial types, $n-1$ virus types and one zooplankton, is patterned after equations (7.28) in \cite{W} with slight changes. Densities of bacteria strains are denoted by $B_i$, virus strains by $V_i$,
and zooplankton by $Y$. The difference in our model and  (7.28) is in the way
that inter and intra-specific competition among bacteria
is modeled. We assume that the
density dependent reduction in growth rate due to competition  is identical for all bacterial strains as in \cite{KS,KS2b,KS2}. Virus strain $V_i$ infects bacterial strain $B_i$ for $i\ne n$ but $B_n$ is resistent to virus infection.
Zooplankton graze on bacteria at a strain independent rate. Virus adsorption rate is $\phi_i$ and burst size is $\beta_i$; $w$ represents a common loss rate. The equations follow.

\begin{eqnarray}\label{LVmono}
B_i' &=& B_i(r_i-w-aB )-B_i\phi_iV_i-\alpha B_i Y,  \nonumber\\
V_i' &=& V_i(\beta_i \phi_i B_i - k_i-w),\ 1\le i \le n-1 \nonumber\\
B_n' &=& B_n(r_n-w-aB)-\alpha B_n Y\\
Y'   &=& Y(\alpha \rho B - w - m),  \nonumber
\end{eqnarray}
where $B=\sum_j B_j$ is the sum of all bacterial densities.

It is convenient to scale variables as:
$$
P_i=\phi_iV_i, \ H_i=aB_i,\ Z=\alpha Y,
$$
and parameters as
$$
n_i=\beta_i\phi_i/a,\ \lambda=\alpha \rho/a,\ e_i=\frac{k_i+w}{n_i}, \ q=\frac{w+m}{\lambda}.
$$
This results in the following scaled system where $H=\sum_j H_j$:

\begin{eqnarray}\label{eqns-scaled}
H_i' &=& H_i(r_i-w-H)-H_iP_i-H_i Z, \nonumber\\
P_i' &=& n_iP_i(H_i-e_i),\ 1\le i \le n-1 \nonumber\\
H_n' &=& H_n(r_n-w-H)-H_n Z\\
Z'   &=& \lambda Z(H - q).  \nonumber
\end{eqnarray}
Only positive solutions of \eqref{eqns-scaled} with $H_i(0)>0, P_j(0)>0, Z(0)>0$ for all $i,j$ are of interest. It is then evident that
$H_i(t)>0, P_j(t)>0, Z(t)>0$ for all $t$ and $i,j$.

There is a unique  positive equilibrium $E^*$ if and only if the virus-resistent microbe $H_n$ has the lowest growth rate among the bacteria
\begin{equation}\label{wkBn}
w<r_n<r_j,\  1 \le j \le n-1,
\end{equation}
and if
\begin{equation}\label{hiresource}
\sum_{i=1}^{n-1} e_i<q<r_n-w.
\end{equation}
Then $E^*$ is given by
\begin{equation}
H_j^*= e_j, P_j^*= r_j-r_n, \ j\ne n,\ H_n^*= q-\sum\limits_{i=1}^{n-1} e_i,\ Z^*  = r_n-w-q.\nonumber
\end{equation}
Evidently, \eqref{hiresource} requires that each virus strain controls the population density of its targeted bacterial strain such that the zooplankton cannot be maintained without the presence of the resistent strain, which cannot grow too slowly.

Our main result follows. We assume that \eqref{wkBn} and \eqref{hiresource} hold.

\begin{theorem}\label{persist}
$E^*$ is a stable equilibrium and the system is permanent in the sense that
there exists $\epsilon>0$ such that every positive solution satisfies:
\begin{equation}\label{per}
H_i(t)>\epsilon, \ P_j(t)>\epsilon, \ Z(t)>\epsilon,\ t> T
\end{equation}
for all $i,j$ where $T>0$, but not $\epsilon$, depends on initial conditions.

The long term time average of each population
is its equilibrium value:
\begin{equation}\label{timeave}
\lim_{t\to\infty} \frac{1}{t}\int_0^t X(s)ds = X^*,\ X=H_i,P_j,Z,
\end{equation}
and
$H_n(t)$ and $Z(t)$ converge to their equilibrium values $H_n^*$ and $Z^*$.

Moreover,  a positive solution  either converges to $E^*$ or its  omega limit set consists of non-constant positive entire trajectories
satisfying $\sum_{i=1}^nH_i(t)=\sum_{i=1}^n H_i^*$,
 $H_n(t)= H_i^*$, $Z(t) = Z^*$,
and where $(H_i(t),P_i(t))$ is a positive solution of the classical Volterra system
  \begin{eqnarray}\label{planar}
  H_i'&=& H_i(P_i^*-P_i)\\ \nonumber
  P_i'& =& n_iP_i (H_i-H_i^*),\ 1\le i\le n-1.
\end{eqnarray}
\end{theorem}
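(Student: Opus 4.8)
The plan is to build everything on a single well-chosen Lyapunov-type function of the classical Goh/Volterra form. First I would introduce
$$
L = \sum_{i=1}^{n-1}\Bigl(H_i - H_i^* - H_i^*\ln\frac{H_i}{H_i^*}\Bigr) + \sum_{i=1}^{n-1}\frac{1}{n_i}\Bigl(P_i - P_i^* - P_i^*\ln\frac{P_i}{P_i^*}\Bigr) + \Bigl(H_n - H_n^* - H_n^*\ln\frac{H_n}{H_n^*}\Bigr) + \frac{1}{\lambda}\Bigl(Z - Z^* - Z^*\ln\frac{Z}{Z^*}\Bigr),
$$
each summand nonnegative and zero only at $E^*$. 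Differentiating along \eqref{eqns-scaled} and using the equilibrium relations $H_i^*=e_i$, $r_i-w = H^* + P_i^* + Z^*$ for $i\le n-1$, $r_n-w = H^*+Z^*$, and $H^*=q$, the cross terms involving $P_i$, $Z$, and the $e_i$'s should telescope away, leaving
$$
\dot L = -\Bigl(\sum_{j=1}^n H_j - H^*\Bigr)\Bigl(\sum_{j=1}^n (H_j - H_j^*)\Bigr) = -\,(H-H^*)^2 \le 0.
$$
This immediately gives stability of $E^*$ (sublevel sets of $L$ are positively invariant neighborhoods), and boundedness of all solutions; combined with the fact that $H_i,P_j,Z$ stay positive, it confirms $E^*$ is the only candidate for a boundary-free attractor.

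Next I would extract the quantitative conclusions. Since $L$ is bounded below and $\dot L = -(H-H^*)^2$, integrating gives $\int_0^\infty (H(s)-H^*)^2\,ds < \infty$, and since $H$ has bounded derivative, $H(t)\to H^* = q$. Feeding $H\to q$ back into the $Z$ equation and into the $H_n$ equation (an asymptotically autonomous scalar equation $H_n' = H_n(r_n - w - H - Z)$ with $H\to q$), I would use a LaSalle/asymptotic-autonomy argument to conclude $Z(t)\to Z^*$ and $H_n(t)\to H_n^*$. For the time averages \eqref{timeave}, I would integrate each equation in log form: e.g. $\frac{1}{t}\ln\frac{P_i(t)}{P_i(0)} = n_i\bigl(\frac{1}{t}\int_0^t H_i - e_i\bigr)$; boundedness of $P_i$ away from $0$ and $\infty$ (permanence, below) forces the left side to $0$, giving the average of $H_i$. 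Similar manipulations of the $Z$ and $H_i$ equations, together with $\frac1t\int_0^t H \to q$ (from $H\to q$), pin down all the remaining averages, in particular $\overline{P_i}=r_i-r_n$.

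For permanence I would invoke the standard average-Lyapunov / acyclicity machinery (Hofbauer–Sigmund, Thieme): the flow is dissipative by the bound above, and one must check that every boundary equilibrium and boundary omega-limit set is a repeller transverse to the interior. The boundary faces are where some $H_i$, $P_j$, or $Z$ vanishes; on each such face the reduced system is again of KtW/Volterra type, and conditions \eqref{wkBn}–\eqref{hiresource} should guarantee that the missing species has positive invasion rate (e.g. if $P_i=0$ then $H_i$ grows past $e_i$ so $P_i$ can invade; if $Z=0$ then since $\sum e_i < q$ eventually $H>q$ so $Z$ invades; if some $H_i=0$, $i\le n-1$, it invades because $r_i>r_n$). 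This transversality, plus acyclicity of the boundary flow, yields uniform persistence and hence the uniform $\epsilon$ in \eqref{per}. Finally, for the omega-limit description, I would apply the LaSalle invariance principle to $L$: $\omega$-limit sets lie in $\{\dot L = 0\} = \{H = H^*\}$ and are invariant, so on them $H\equiv H^*$; differentiating $H\equiv H^*$ and using $H_n\equiv H_n^*$, $Z\equiv Z^*$ (already shown) forces $\sum_{i=1}^{n-1} H_i \equiv \sum e_i$ and reduces the $H_i,P_i$ equations exactly to the decoupled conservative planar Volterra systems \eqref{planar}, whose nonconstant orbits are periodic—hence nonconvergent solutions are quasiperiodic.

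The main obstacle I anticipate is the permanence step: verifying acyclicity of the boundary flow and computing all the invasion exponents requires understanding the dynamics on every sub-face, and the sub-faces are themselves KtW systems with fewer species whose own long-term behavior (not just their equilibria) must be controlled enough to rule out cyclic chains of boundary attractors. The Lyapunov computation for $\dot L$ is the other place where care is needed—getting the cancellations right depends on using all the equilibrium identities simultaneously—but that is routine algebra rather than a conceptual difficulty.
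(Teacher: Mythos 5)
Your Lyapunov function $L$ is exactly the one the paper uses (coefficients $1$, $1/n_i$, $1/\lambda$), and the identity $\dot L=-(H-H^*)^2$, the resulting stability of $E^*$, the time-average identities, and the LaSalle reduction of omega limit sets to the decoupled planar Volterra systems \eqref{planar} all match the paper's argument. But there are two genuine gaps. The first is permanence: the paper does \emph{not} go through the average-Lyapunov/acyclicity machinery and the boundary invasion analysis you outline (and rightly flag as the main obstacle --- the boundary faces are themselves Lotka--Volterra systems whose omega limit sets need not be equilibria, so verifying acyclicity and computing invasion exponents along arbitrary boundary attractors is a serious undertaking that you have not carried out). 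Instead it observes that every positive solution is confined to a compact sublevel set of $L$ inside the open orthant (so the per-solution bounds you wanted to borrow from permanence come for free, with no circularity), hence the time-average identity \eqref{timeave} holds for \emph{every} positive solution and forces $\limsup_{t\to\infty}x(t)\ge x^*$ for each component $x$. That is uniform weak persistence with an $\epsilon$ independent of initial data, and the standard theorem of Thieme (weak uniform persistence plus a compact global attractor implies strong uniform persistence) then yields \eqref{per} with no boundary analysis whatsoever. Missing this shortcut leaves your permanence step essentially unproved.

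The second gap is the convergence $Z(t)\to Z^*$ and $H_n(t)\to H_n^*$, which cannot be obtained the way you suggest by feeding $H\to q$ into the scalar equation $Z'=\lambda Z(H-q)$: the limiting equation is $Z'=0$, every constant solves it, and $\int_0^\infty(H-q)^2\,dt<\infty$ does not make $\int_0^\infty(H-q)\,dt$ converge, so $\ln Z$ may still drift. The correct order of argument (the paper's) is: LaSalle gives $H\equiv H^*$ on the omega limit set; hence $Z'\equiv 0$ there, so $Z$ is constant along each entire limit trajectory; then $H_n'=H_n(Z^*-Z)$ shows $H_n$ grows or decays exponentially along that entire trajectory unless $Z=Z^*$, which is incompatible with the limit set being compact in the open orthant; finally $H_n$ is constant on the limit set and equals $H_n^*$ by \eqref{timeave}. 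In your write-up the convergence of $Z$ and $H_n$ is asserted before the LaSalle step and then reused inside it, so as stated the logic is circular as well as incomplete. The remainder of your plan is sound.
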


As advertised in the introduction, Theorem~\ref{persist} says that the KtW equilibrium  is unique and stable to perturbations. More importantly, the system is permanent
in the sense that all population densities are ultimately bounded away from extinction by an initial condition independent positive quantity.
The zooplankton density and the density of the bacterial strain
resistant to virus infection converge to their equilibrium values and
if a solution does not converge to the positive equilibrium, then its long-term dynamics is described by the system consisting of $n-1$ conservative two-species systems \eqref{planar}. The latter would imply that $H_i,P_i$ are periodic with period depending on  parameters and its amplitude. However, the restriction $\sum_{i=1}^nH_i(t)=\sum_{i=1}^n H_i^*$ requires a very special resonance among the periods, suggesting that this alternative is unlikely.

Of course, our KtW model \eqref{LVmono} is  very special. Our aim was not to offer a general KtW model. Rather, it was to say as much as we could  about the long term dynamics of a KtW model and for that, we made simplifying assumptions. Most of these assumptions are also made
in the system (7.28) in \cite{W} and in similar models in the literature \cite{Fl,Jover}. It should be noted that our main result, that the KtW model is permanent, continues to hold for sufficiently small
changes in system parameters  \cite{SZ}.

Finally, we note that the main results of our earlier work \cite{KS2b}, in which we were concerned only with bacteria-virus infection networks,
can be applied to obtain results similar to Theorem~\ref{persist} for KtW models with more general infection networks than the one to one network. For example, our scaled model for the nested infection  network consisting of $n$ bacteria strains and $n$ virus strains in \cite{KS2b} is the following:
\begin{eqnarray}\label{LVnest}
H_i' &=& H_i\left(r_i-\sum\limits_{j=1}^n H_j-\sum\limits_{j \ge i}P_j\right)\\
P_i' &=& e_in_iP_i \left(\sum\limits_{j \le i}H_j-\frac{1}{e_i}\right),\ 1\le i\le n. \nonumber
\end{eqnarray}
To compare with \eqref{eqns-scaled}, set $Z=P_n$ and regard it as a zooplankton grazer. Also,  we must view the $r_i$ as $r_i-w$, $n_i=k_i+w$, and $e_i=\beta_i\phi_i/n_i$, viewed as the efficiency of virus  infection of bacteria, is comparable to the reciprocal of its value in \eqref{eqns-scaled}.
The existence of a positive equilibrium for \eqref{LVnest} requires life history trade-offs of bacteria and virus strains.
Bacteria that are more susceptible to virus infection must grow faster
\begin{equation}\label{r}
r_1>r_2>\cdots > r_n>Q_n,
\end{equation}
and the efficiency of virus infection should decline as its host range increases:
\begin{equation}\label{enorder}
e_1>e_2>e_3>\cdots>e_n.
\end{equation}
Here, $Q_n=\frac{1}{e_1}+\left(\frac{1}{e_2}-\frac{1}{e_1}\right)+\left(\frac{1}{e_3}-\frac{1}{e_2}\right)+\cdots+\left(\frac{1}{e_n}-\frac{1}{e_{n-1}}\right)$.
If \eqref{r} and \eqref{enorder} hold, there is a unique positive equilibrium $E^*$ and  all positive solutions converge to it \cite{KS2b}. By simply renaming
$Z=P_n$ and regarding it is a zooplankton, we obtain an even stronger result than Theorem~\ref{persist} for the KtW model with nested infection network provided these tradeoffs hold. Quite arbitrary infection networks among bacteria and phage might be treated using the approach in \cite{KS3}.

\begin{center}
\includegraphics[width=10cm]{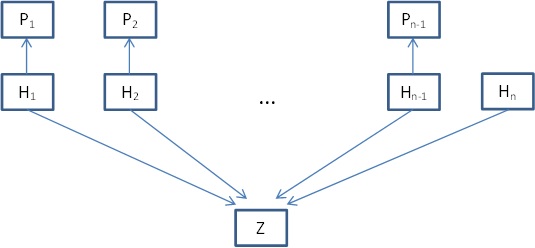}
\captionof{figure}{Interactions between the n-1 virus, n host, and the zooplankton.}
\label{fig:1}
\end{center}

\begin{center}
\includegraphics[width=12.5cm]{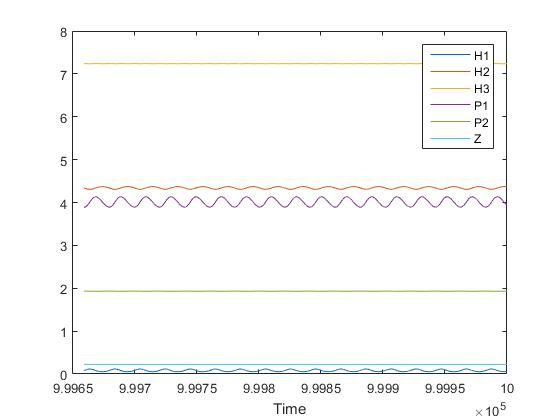}
\captionof{figure}{Last 1000 time units of a 1 million run on a population of 3 bacteria, 2 virus, and one zooplankton using ode45.  Parameters specified in the figure are chosen to satisfy conditions (\ref{wkBn}), and (\ref{hiresource}), and are not intended to be biologically realistic.  Solutions are highly oscillatory, and seem to be periodic.}
\label{fig:2}
\end{center}

\section{Proof of Main Result}
\begin{proposition}\label{bound} Solutions of \eqref{eqns-scaled} with nonnegative (positive) initial data are well-defined for all $t\ge 0$ and remain nonnegative (positive).
In addition, the system has a compact global attractor. Indeed, if
 $F(t) = \sum\limits_{i=1}^{n}H_i(t)+\sum\limits_{i=1}^{n-1}\frac{P_i(t)}{n_i}+\frac{Z}{\lambda}$ then $$F(t)\le \frac{Q}{W}+(F(0)-\frac{Q}{W})e^{-Wt} \le \max\{F(0),\frac{Q}{W}\},$$
 and
 $$\limsup_{t\to\infty}F(t)\le \sum_{i=1}^{n}(1+\frac{r_i}{W})r_i,$$
 where $K=\max\limits_{i=1}^{n}\{H_i(0),r_i\}$, $W=\min\limits_{i=1}^{n}\{e_i,w,q\}$ and $Q = \sum\limits_{i=1}^{n}(W+r_i)K$.
\end{proposition}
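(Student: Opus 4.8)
The plan is to hang everything on the auxiliary function $F$ and on scalar comparison arguments. First I would settle local existence, uniqueness, and invariance of the cones: the right-hand side of \eqref{eqns-scaled} is polynomial, hence locally Lipschitz, so local solutions exist and are unique; and since every equation has the form $X' = X\,g(\cdot)$ with $g$ continuous along the solution, each component satisfies $X(t) = X(0)\exp\big(\int_0^t g\big)$, which keeps $X(t) > 0$ on the maximal interval when $X(0) > 0$ and $X(t) \ge 0$ when $X(0) \ge 0$. Global existence then follows from the boundedness established below.

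The only input I need from the bacteria equations is the logistic inequality $H_i' \le H_i(r_i - H_i)$, obtained by discarding the nonpositive terms $-w$, $-P_i$, $-Z$ and using $H = \sum_j H_j \ge H_i$ (and similarly for $H_n$). Comparison with $u' = u(r_i - u)$ gives $H_i(t) \le \max\{H_i(0), r_i\} \le K$ for all $t \ge 0$, and, since positive solutions of the logistic equation converge to $r_i$, also $\limsup_{t\to\infty} H_i(t) \le r_i$.

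Next I differentiate $F$ along solutions. The model-specific point is that the interaction terms cancel in pairs — $-H_iP_i$ against $+H_iP_i$ from the $P_i$ equation, and $-\sum_i H_iZ = -HZ$ against $+HZ$ from the $Z$ equation — leaving
\begin{equation*}
F' = \sum_{i=1}^n H_i(r_i - w) - H^2 - \sum_{i=1}^{n-1} e_iP_i - qZ .
\end{equation*}
Adding $WF$, the coefficient of $P_i$ becomes $W/n_i - e_i$ and that of $Z$ becomes $W/\lambda - q$, both nonpositive because $e_in_i = k_i + w \ge w \ge W$ and $q\lambda = w + m \ge w \ge W$ (recall $e_i = (k_i+w)/n_i$, $q = (w+m)/\lambda$, and all parameters positive, so also $W > 0$). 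Dropping these terms and using $H^2 = (\sum_j H_j)^2 \ge \sum_j H_j^2$ together with $-w \le 0$ and $0 \le H_i \le K$ yields
\begin{equation*}
F' + WF \le \sum_{i=1}^n H_i(r_i - w + W - H_i) \le \sum_{i=1}^n H_i(r_i + W) \le \sum_{i=1}^n (W + r_i)K = Q .
\end{equation*}
Scalar comparison with $\tilde F' = Q - W\tilde F$, $\tilde F(0) = F(0)$, produces exactly the stated bound $F(t) \le Q/W + (F(0) - Q/W)e^{-Wt} \le \max\{F(0), Q/W\}$. Since $F$ dominates every nonnegative state variable up to a positive constant, solutions are bounded, hence global; and the uniform ultimate boundedness just proved supplies a bounded absorbing set, whence a compact global attractor by the standard theory of dissipative systems. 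For the final assertion I rerun the $F' + WF$ estimate, but given $\delta > 0$ I use $\limsup_{t\to\infty} H_i(t) \le r_i$ to replace $H_i \le K$ by $H_i \le r_i + \delta$ for $t$ large, obtaining $F' + WF \le \sum_i (r_i + \delta)(r_i + W)$ eventually, hence $\limsup_{t\to\infty} F(t) \le W^{-1}\sum_i (r_i + \delta)(r_i + W)$; letting $\delta \downarrow 0$ gives $\limsup_{t\to\infty} F(t) \le \sum_i (1 + r_i/W)r_i$.

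None of this is deep; the computations are routine once $F$ is in hand. The one place that deserves care is the sign bookkeeping in $F' + WF$: one must notice that the decay rates actually appearing in $F'$ are $e_in_i = k_i + w$ and $q\lambda = w + m$, each exceeding $w \ge W = \min_i\{e_i, w, q\}$, so the $P_i$ and $Z$ contributions can be thrown away and the estimate closes on the already-controlled bacterial variables alone. The genuinely model-specific input — and the reason $F$ is the right Lyapunov-type function — is the exact cancellation of all interaction terms, a reflection of the conservative Lotka--Volterra structure of the predator--prey couplings.
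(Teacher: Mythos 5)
Your proof is correct and follows essentially the same route as the paper: the logistic comparison $H_i'\le H_i(r_i-H_i)$ to bound the bacteria, exact cancellation of the interaction terms in $F'$, the observation that $e_in_i=k_i+w\ge W$ and $q\lambda=w+m\ge W$ to absorb the $P_i$ and $Z$ terms into $-WF$, and scalar comparison with $\tilde F'=Q-W\tilde F$. Your write-up is in fact somewhat more careful than the paper's (which states the absorbed decay coefficients with minor typographical slips), but there is no substantive difference in approach.
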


\begin{proof}
Existence and positivity of solutions follow from the form of the right hand side.
Therefore, $H_i'(t) \le H_i(t)(r_i-H_i(t))$.  Hence $H_i(t) \le K$ and $\limsup_{t\to\infty}H_i(t)\le r_i$.
\begin{eqnarray*}
\frac{dF}{dt} &=& \sum\limits_{i=1}^{n}(r_i-w)H_i-(\sum_{i=1}^n H_i)(\sum\limits_{j=1}^{n}H_j)-\sum\limits_{i=1}^{n}P_ie_i-Zq\\
&\le& \sum\limits_{i=1}^{n}r_iH_i-W\sum\limits_{i=1}^{n}(H_i+\frac{P_i}{e_in_i}+Z)\\
& =&\sum\limits_{i=1}^{n}(W+r_i)H_i-W F.
\end{eqnarray*}
The estimate on $F(t)$ follows by bounding the first summation by $Q$ and integrating; the estimate on the limit superior
follows from the estimate of the limit superior of the $H_i$ above and by integration.
\end{proof}

\begin{proof} {\bf Proof of Theorem~\ref{persist}.}
Since positive equilibrium $E^*$ exists, we can write \eqref{eqns-scaled} as
\begin{eqnarray}\label{LV1}
H_i' &=& H_i\left(\sum\limits\limits_{j=1}^n (H_j^*-H_j)+P_i^*-P_i+Z^*-Z\right)\\ \nonumber
H_n' &=& H_n\left(\sum\limits\limits_{j=1}^n (H_j^*-H_j)+(Z^*-Z)\right)\\ \nonumber
P_i' &=& n_iP_i (H_i-H_i^*),\ 1\le i<n\\ \nonumber
Z'   &=& Z \lambda \sum\limits\limits_{i=1}^n(H_i-H_i^*) \nonumber
\end{eqnarray}

Let $U(x,x^*)=x-x^*-x^*\log x/x^*,\ x,x^*>0$, be the familiar Volterra function and let
\begin{equation}\label{Lf}
V(H,P,Z)=\sum_{i=1}^n c_iU(H_i,H_i^*)+\sum_{j=1}^m d_jU(P_j,P_j^*)+gU(Z,Z^*)
\end{equation}
where $c_1,\cdots, c_n$ and $d_1,\cdots,d_m$ and $g$ are to be determined.

Then the derivative of $V$ along solutions of \eqref{LV1}, $\dot V$, is given by
\begin{eqnarray*}
  \dot V &=& -\left(\sum_i^n c_i(H_i-H_i^*)\right)\left(\sum_j^n (H_j-H_j^*)\right)\\
  & &-\sum_i^{n-1} c_i(H_i-H_i^*)(P_i-P_i^*) \\\nonumber
   & & -\sum_i^n c_i(H_i-H_i^*)(Z-Z^*)+\sum_i^{n-1} d_in_i(P_i-P_i^*)(H_i-H_i^*)\\
   & &+\sum_i^n g\lambda(Z-Z^*)(H_i-H_i^*)
\end{eqnarray*}
If $c_i=1$, $g=\frac{1}{\lambda}$, and $d_i=\frac{1}{n_i}$ then the last four summations cancel out and we have
\begin{equation}\label{MonoVdot}
\dot V =-\left(\sum_i H_i-\sum_i H_i^*\right)^2
\end{equation}
As $\dot V\le 0$, $E^*$ is locally stable \cite{Hale} and for each positive solution there exists $p,P>0$ such that
$p\le x(t)\le P, t\ge 0$, where $x=H_i,P_j,Z$.
Then \eqref{timeave} follows immediately from Theorem 5.2.3 in \cite{HS}.

Consider a positive solution of \eqref{LV1}. By LaSalle's invariance principle \cite{Hale,HS}, every point in its  omega limit set $L$ must satisfy $\sum_iH_i(t)=\sum_iH_i^*$ since $L\subset \{(H,V):\dot V=0\}$.
Since $V(x)\le V(H(0),P(0))$ for all $x\in L$, $L$ is a compact subset of the interior of the positive orthant.
We now consider a trajectory belonging to $L$; until further notice, all considerations involve this solution.
Since $\sum_iH_i(t)=\sum_iH_i^*$, the solution satisfies
\begin{eqnarray}\label{LV2}
H_i' &=& H_i\left(P_i^*-P_i+Z^*-Z\right)\\ \nonumber
P_i' &=& n_iP_i (H_i-H_i^*),\ 1\le i \le n-1\\ \nonumber
H_n' &=& H_n\left(Z^*-Z\right)\\ \nonumber
Z'   &=& 0 \nonumber
\end{eqnarray}
We see that $Z'\equiv 0$, therefore $Z(t)$ is a constant.  Then, $H_n'=H_n(Z^*-Z)$ so $H_n(t)$  either converges to zero, blows up to infinity, or is identically constant, depending on the value of
$Z$. The only alternative  consistent with $L$ being invariant, bounded, and bounded away from the boundary of the orthant is that $H_n(t)$ is constant
and that $Z=Z^*$. By \eqref{timeave}, it follows that $H_n=H_n^*$. Therefore \eqref{LV2} becomes:
\begin{eqnarray}\label{LV3}
H_i' &=& H_i\left(P_i^*-P_i\right)\\ \nonumber
P_i' &=& n_iP_i (H_i-H_i^*),\ 1\le i \le n-1\\ \nonumber
H_n &=& H_n^*\\ \nonumber
Z   &=& Z^* \nonumber
\end{eqnarray}
This establishes the assertions regarding \eqref{planar}. Note that as \eqref{LV3} holds on the limit set $L$ of our positive solution, it follows that $H_n(t)\to H_n^*,\ Z(t)\to Z^*$ for our positive solution.

Finally, we prove \eqref{per}.
It follows from \eqref{timeave} that $\limsup_{t\to\infty} x(t)=x^*$, for each component $x=H_i,P_j,Z$ of an arbitrary positive solution of \eqref{LV1}. This means that \eqref{LV1} is uniformly weakly persistent.  Proposition~\ref{bound} implies that the key hypotheses of Theorem 4.5 from \cite{T,ST} are satisfied, and therefore weak uniform persistence implies strong uniform persistence. This is precisely \eqref{per}.
\end{proof}

\begin{center}
\begin{tabular}{|c|c|}
  \hline
  Parameter &  value \\\hline\hline
  $r_1$ &  17.089453152634810\\\hline
  $r_2$ &  15.009830525061846\\\hline
  $r_3$ &  13.077955412892173\\\hline
  $n_1$ &  0.299362132425990\\\hline
  $n_2$ &  0.011514418415303\\\hline
  $e_1$ &  0.081255501212170\\\hline
  $e_2$ &  4.340892914457329\\\hline
  $q$ &  10.465564663600418\\\hline
  $\lambda$ &  4.474468552537804\\\hline
  $w$   & 1.194565710732100\\\hline
\end{tabular}
\captionof{figure}{Parameter values used in $Figure~\ref{fig:2}$}
\end{center}

{}
\end{document}